\documentclass[runningheads]{llncs}
\usepackage[T1]{fontenc}
\usepackage{graphicx}
\usepackage{booktabs}
\usepackage[misc]{ifsym}
\newcommand{\corr}{(\Letter)}
\usepackage{mwe}
\usepackage[american]{babel}

\usepackage{mathtools} 
\usepackage{booktabs} 
\usepackage{tikz} 
\usepackage{algorithm}
\usepackage[table,dvipsnames]{xcolor} 
\usepackage{amsmath,amssymb,amsfonts}
\usepackage{algpseudocode}
\usepackage{textcomp}
\usepackage{comment}
\usepackage{textcomp}
\usepackage{booktabs}
\usepackage{lipsum}
\usepackage{wrapfig}
\usepackage{pgfplots}
\usepackage{tikz}
\usepackage{scalerel}
\usepackage{enumitem}
\usetikzlibrary{positioning}

\newcommand{\RE}{\mathbb{R}}
\newcommand{\ZZ}{\mathbb{Z}}
\newcommand{\best}[1]{{\cellcolor{green!15}{#1}}}

\tikzset{
red node/.style={circle,fill=red!20,draw,minimum size=0.5cm,inner sep=0pt},
green node/.style={circle,fill=green!20,draw,minimum size=0.5cm,inner sep=0pt},
blue node/.style={circle,fill=blue!20,draw,minimum size=0.5cm,inner sep=0pt},
}

\algblock{DOPAR}{ENDDOPAR}
\algnewcommand\algorithmicdopar{\textbf{parallel for}}
\algnewcommand\algorithmicdopardo{\textbf{do}}
\algnewcommand\algorithmicenddopar{\textbf{end}}
\algrenewtext{DOPAR}[1]{\algorithmicdopar\ #1\ \algorithmicdopardo}
\algrenewtext{ENDDOPAR}{\algorithmicenddopar}

\newcommand{\calO}{\mathcal{O}}

\begin{document}

\title{Scaling Weisfeiler–Leman Expressiveness Analysis \\ to Massive Graphs with GPUs}
\author{Filippo Biondi\inst{1} \corr \and Mirco Tribastone\inst{1} \and Max Tschaikowski\inst{2}}

\toctitle{Scaling Weisfeiler–Leman Expressiveness Analysis to Massive Graphs with GPUs}
\tocauthor{Filippo Biondi, Mirco Tribastone, Max Tschaikowski}

\institute{
IMT Lucca, Italy\\
\email{\{filippo.biondi,mirco.tribastone\}@imtlucca.it}
\and
Sapienza University Rome, Italy\\
\email{max.tschaikowski@uniroma1.it}
}

\titlerunning{Scaling 1-WL Analysis to Massive Graphs with GPUs}

\maketitle              

\begin{abstract}
The stable coloring of the Weisfeiler-Leman (1-WL) test  is a cornerstone of Graph Neural Networks because it provides an upper bound to the expressive power of message-passing architectures. Unfortunately, computing it presents two fundamental bottlenecks. First, classic algorithms are inherently sequential and cannot exploit modern massively parallel hardware. Second, these are \emph{global} algorithms, i.e., they require availability in memory of the full graph, severely limiting applicability to real-world instances. We leverage a linear-algebraic interpretation of 1-WL stable coloring and introduce two key contributions: (i)~a randomized refinement algorithm with tight probabilistic guarantees and (ii)~a correctness-preserving batching scheme that decomposes the graph into independently processable subgraphs while provably returning a stable coloring of the original graph. This approach maps directly to GPU-efficient primitives. In numerical experiments, our CUDA implementation delivers speedups up to two orders of magnitude over classical CPU-based partition refinement and, for the first time, successfully computes stable colorings on web-scale graphs with over 30 billion edges, where CPU baselines time out or fail.
\keywords{ Weisfeiler-Leman Test (1-WL)  Computation \and Randomized Parallel Algorithm \and Linear Algebra Characterization}
\end{abstract}

 \section{Introduction}

Graph Neural Networks (GNNs) have emerged as powerful tools for learning on graph-structured data, achieving state-of-the-art performance in domains such as chemistry, biology, social networks, and recommender systems~\cite{10.1145/2487788.2488173,KAUFFMAN1969437,10.1145/2049662.2049663,Shizuka22}. A central question in their theoretical analysis is \emph{expressiveness}: which structural distinctions can a GNN architecture capture? The Weisfeiler-Leman (WL) graph isomorphism test---particularly its first-order variant (1-WL)---has become the standard lens for characterizing GNN expressiveness~\cite{DBLP:conf/lics/Grohe21,JMLR:v24:22-0240}. 
The 1-WL \emph{stable coloring} is relevant because it provides an upper bound to the expressive power of  message-passing GNNs---two nodes assigned the same color will always receive identical embeddings, meaning the network cannot distinguish  them.

Computing a stable coloring amounts to determining a partition of a graph  such that nodes in the same block share identical neighborhood counts~\cite{godsil1997compact,DBLP:conf/lics/Grohe21}.  Classical algorithms based on partition refinement compute the coarsest stable coloring, i.e., the partition with the fewest blocks, in optimal $O(m \log n)$ complexity~\cite{paige1987three}, a bound that is asymptotically tight~\cite{Grohe2013}. Yet they suffer from two  fundamental limitations: they are \emph{sequential} and \emph{global}, requiring the entire input graph to reside in memory.  These characteristics make them ill-suited for modern architectures such as GPUs, which excel at massively parallel workloads.

As a consequence, despite its centrality, computing stable colorings at web scale remains practically infeasible, limiting empirical WL-based expressiveness analysis to small or medium graphs. In this paper we address both limitations of the state of the art with a \emph{parallel} algorithm that can execute \emph{locally} on independent subgraphs.

\textbf{Randomized linear-algebraic refinement.}
Our approach leverages a key invariance property: if a vector is symmetric on an equitable partition (that is, it has equal values on its blocks of nodes), then so is its image under the adjacency matrix~\cite{10.5555/2892753.2892817,DBLP:conf/cdc/TognazziTTV18,DBLP:journals/tcs/CardelliTTV19a,DBLP:conf/icdm/SquillaceTTV24}. As a result, stable coloring can be expressed as a sequence of matrix–vector multiplications (computing images until a fixed point) that naturally maps to highly optimized GPU kernels. 

While linear-algebraic approaches based on deterministic power iteration have been explored previously~\cite{10.5555/2892753.2892817}, they typically rely on floating-point arithmetic and exact constructions that may suffer from numerical instability and limited scalability on very large graphs. In contrast, we develop a randomized linear-algebraic refinement Monte Carlo framework that operates entirely in integer arithmetic with explicit probabilistic correctness guarantees. 

\textbf{Correctness-preserving batching for large-scale graphs.}
The global nature of 1-WL means that color refinement depends on complete neighborhood counts. Processing arbitrary edge subsets independently therefore breaks correctness in general.
We introduce a batching scheme that provably preserves the stable coloring of the original graph. We partition edges into disjoint batches and identify inner nodes, whose outgoing edges lie entirely within a single batch. Only such nodes are eligible for color aggregation within that batch; nodes with cross-batch edges are temporarily kept as singleton colors.
Each batch thus induces a local refinement that is guaranteed to be consistent with the global stable coloring. The resulting local colorings are merged via a quotient construction, yielding a stable coloring of the original graph—though not necessarily the coarsest one. Iterating this procedure produces a fixed point: if the reduced graph at the fixed point fits in a single batch, the resulting coloring is provably the coarsest stable coloring; otherwise, it is a refinement thereof. This enables correct 1-WL analysis even when the full graph does not fit in device memory.

\textbf{Experimental results.} Our two main contributions are orthogonal: the randomized refinement addresses compute scalability, while batching addresses memory scalability. We present extensive experiments where we analyze both aspects using a CUDA implementation. On medium-scale and web-scale graphs, randomized refinement achieves up to $\sim$138$\times$ speedups over classic CPU-based partition-refinement algorithms. By simulating memory constraints on graphs that otherwise could be fully analyzed, iterated contraction through a fixed number of batches yields partitions that are empirically within 5\% of the coarsest stable coloring. On graphs with more than 30 billion edges, for which no coarsest stable coloring is available, combining batching and randomization enabled 1-WL stable colorings at previously infeasible scales.

\section{Related Work}

\textbf{Weisfeiler--Leman and graph expressiveness.} 
The Weisfeiler–Leman (WL) algorithm, in particular its first-order variant (1-WL), is central in the theoretical understanding of GNN expressiveness~\cite{xu2018how,10.1609/aaai.v33i01.33014602,DBLP:conf/lics/Grohe21}. The 1-WL stable coloring characterizes the  power of message-passing architectures: nodes receiving the same stable color cannot be distinguished by such networks. A linear-algebraic characterization of stable coloring was established by~\cite{10.5555/2892753.2892817}, where color refinement is shown to be expressible via invariance under adjacency matrix multiplication. However, it relies on floating-point power iteration and assumes that the full graph fits in memory. It does not address memory scalability or batching.

\textbf{Parallel partition refinement algorithms.} 
Stable coloring coincides with the computation of the coarsest equitable partition of a graph~\cite{godsil1997compact,DBLP:conf/lics/Grohe21}. When a graph is interpreted as a dynamical system via its adjacency matrix, equitable partitions correspond to backward equivalence (BE) for linear systems~\cite{DBLP:journals/tcs/CardelliTTV19a,DBLP:journals/tac/CardelliGLTTV23}, an observation that has been also extended to nonlinear systems~\cite{10.1145/3071178.3071265,DBLP:conf/qest/CardelliTTV18,DBLP:journals/jlap/CardelliSTTV23,BBLTTV21Lics}. 
Furthermore, because of the relation between equitable partitions and probabilistic bisimulation for Markov chains~\cite{Larsen19911}, computing equitable partitions  is known to be P-complete~\cite{DBLP:conf/fossacs/ChenBW12}, suggesting that efficient polylogarithmic-time parallel algorithms are unlikely in the worst case. Distributed and multicore CPU implementations exist~\cite{DBLP:journals/entcs/BlomHKP08,DBLP:conf/tacas/DijkP16}, but they rely on irregular memory access patterns that are poorly suited to GPU architectures. 

\textbf{Randomized partition refinement algorithms.} 
Monte Carlo variants of partition refinement based on the Schwartz–Zippel lemma~\cite{schwartz1980fast} have been proposed in more general algebraic settings~\cite{DBLP:conf/lics/ArgyrisLLTTV23}. These approaches operate over fields and rely on standard polynomial identity testing results. In contrast, we work over the ring  $\ZZ / 2^{64} \ZZ$, which enables efficient integer arithmetic on GPUs but does not satisfy the assumptions of classical Schwartz–Zippel bounds. We therefore derive a tailored probabilistic guarantee adapted to this ring structure.
 
\textbf{Power iteration refinement algorithms.} 
Kersting et al. proposed a linear-algebraic formulation of color refinement based on deterministic power iteration~\cite{10.5555/2892753.2892817}. Their method constructs specific real-valued vectors (e.g., using logarithms of prime numbers) whose images under repeated adjacency multiplication induce the coarsest stable coloring. While elegant, this approach relies on floating-point arithmetic and exact numerical comparisons, which may suffer from precision loss on large graphs. As we demonstrate in Section \ref{sec:numerical}, this can lead to incorrect color merges at scale.
In contrast, our method employs a Monte Carlo refinement scheme over fixed-width modular integer arithmetic, providing explicit probabilistic correctness guarantees. Moreover, our formulation is designed to integrate naturally with batching for memory scalability, which is not addressed in prior power-iteration approaches.

\textbf{GPU graph processing frameworks.} 
General-purpose GPU graph frameworks such as Gunrock~\cite{DBLP:journals/topc/WangPDWYWOYLRO17}, GraphBLAST~\cite{DBLP:journals/toms/YangBO22}, and NVIDIA’s cuGraph provide optimized primitives for traversal and classical vertex coloring. However, they do not implement WL stable coloring and typically assume that the entire graph fits in GPU memory. Our work complements these systems with a stable coloring algorithm that supports correctness-preserving batching for graphs exceeding device memory.

\section{Background}

We consider a directed graph $G = (V, A)$ with node set $V = {1,\dots,n}$ and adjacency matrix $A \in \mathbb{N}_0^{n \times n}$. The entry $a_{i,j}$ denotes the (possibly weighted) number of edges from node $i$ to node $j$. This formulation naturally covers both unweighted graphs ($a_{i,j} \in \{0,1\}$) and weighted graphs. Even if the original graph is unweighted, weighted adjacency matrices arise after aggregation in the quotient construction defined below.

A \emph{partition} $\mathcal{H} = {B_1, \dots, B_k}$ of $V$ is a collection of pairwise disjoint nonempty subsets (called \emph{blocks}) such that $\bigcup_{i=1}^k B_i = V$.

\begin{definition}[Stable Coloring]
A partition $\mathcal{H}$ of $V$ is a \emph{stable coloring} if for every pair of blocks $B, B' \in \mathcal{H}$ and all nodes $i,j \in B$,
\[
\sum_{l \in B'} a_{i,l}
=
\sum_{l \in B'} a_{j,l}.
\]
\end{definition}

In the unweighted case, this condition states that nodes $i$ and $j$ have the same number of neighbors in each block $B'$. For weighted graphs, it requires equality of total outgoing weight toward every block.

Stable colorings coincide with equitable partitions in algebraic graph theory and with \emph{backward equivalence (BE)} in the dynamical-systems literature~\cite{DBLP:journals/tcs/CardelliTTV19a}. Throughout the paper, we use the term \emph{stable coloring} as the primary notion, but we rely on these equivalent characterizations when invoking existing invariance results.

\begin{definition}[Coarseness]
Given two partitions $\mathcal{H}$ and $\mathcal{H}'$ of $V$, we say that $\mathcal{H}$ is \emph{coarser} than $\mathcal{H}'$ if every block of $\mathcal{H}'$ is contained in a block of $\mathcal{H}$.
\end{definition}

\begin{definition}[Coarsest Stable Coloring]
A stable coloring $\mathcal{H}$ is the \emph{coarsest stable coloring} if it is coarser than every other stable coloring of $V$.
\end{definition}

\begin{definition}[Quotient Graph]\label{def:qg}
Given a stable coloring $\mathcal{H} = {B_1, \dots, B_k}$, we construct the \emph{quotient graph} $G' = (\mathcal{H}, A')$, whose nodes correspond to the blocks of $\mathcal{H}$. The aggregated adjacency matrix $A' \in \mathbb{N}_0^{k \times k}$ is defined by
\begin{equation}\label{eq:sc}
a'_{B,B'} = \sum_{l \in B'} a_{i,l},
\end{equation}
for any fixed representative $i \in B$.
\end{definition}
The definition is well-posed because the stability condition ensures that the right-hand side does not depend on the choice of representative $i \in B$. Even if the original adjacency matrix $A$ is binary, the quotient matrix $A'$ is generally weighted: the entry $a'_{B,B'}$ counts the number of edges from nodes in block $B$ to nodes in block $B'$.

The quotient graph preserves stable-coloring structure in the following sense: if $\mathcal{H}$ is a stable coloring of $G$, then further refinement can equivalently be performed on the reduced graph $G'$. This property underlies both classical partition-refinement algorithms and the batching strategy introduced later.

\subsubsection{Linear-Algebraic Characterization.}
Stable colorings admit a convenient linear-algebraic interpretation. Let
\begin{align*}
U_{\mathcal{H}} = \{ x \in \mathbb{R}^n \mid 
x_i = x_j \text{ whenever } i,j \in B \text{ for some } B \in \mathcal{H} \}
\end{align*}
denote the subspace of vectors that are constant on each block of $\mathcal{H}$. A partition $\mathcal{H}$ is a stable coloring if and only if
\begin{equation}\label{eq:inv}
A(U_{\mathcal{H}}) \subseteq U_{\mathcal{H}},
\end{equation}
that is, the image of any block-constant vector under multiplication by $A$ remains block-constant~
\cite{DBLP:journals/tcs/CardelliTTV19a}.

This perspective provides the foundation for the linear-algebraic refinement algorithm of Section~4 and enables an implementation based on matrix-vector multiplication primitives that are well suited to GPU architectures.

\section{Methods}
\label{sec:methods}

We present two complementary algorithms for scaling 1-WL stable coloring.
The first (Section~\ref{sec:randomized}) replaces the classical sequential partition-refinement
loop with a randomized linear-algebraic procedure whose inner loop reduces to a single
matrix–vector multiplication—a primitive that maps directly to GPU hardware.
The second (Section~\ref{sec:batching}) introduces a correctness-preserving decomposition of
the edge set into independently processable batches, enabling the computation of a
stable coloring even when the graph does not fit in device memory.
The two contributions are \emph{orthogonal}: the randomized scheme addresses
\emph{compute} scalability (speed), while batching addresses \emph{memory} scalability
(capacity).

\subsection{Randomized Stable Coloring}
\label{sec:randomized}

\textbf{Overview.}
The linear-algebraic characterization of Eq.~\ref{eq:inv} suggests an iterative refinement strategy: 
given a candidate partition $H$, draw a random vector $w$ that is symmetric on $H$, multiply by $A$, and check whether
$Aw$ is still symmetric on $H$.
If it is not, $Aw$ witnesses that $H$ must be refined—specifically, the coarsest
partition on which \emph{both} $w$ and $Aw$ are simultaneously symmetric is a
strict refinement of $H$ that is still consistent with the true stable coloring.
Iterating this process until no further refinement is possible yields a stable coloring. This is the key idea behind Algorithm~1.

The attractive feature of this approach is that each iteration reduces to (i)~sampling
a random symmetric vector and (ii)~performing a single sparse matrix–vector
multiplication, followed by a lexicographic sort to read off the new partition.
Both operations are efficient on GPU hardware.

\textbf{Initial partition.} Algorithm 1 accepts an arbitrary initial partition $H$
and computes the coarsest stable coloring that refines it. For computing the coarsest stable coloring of the full graph, $H$ is set to the trivial partition where all nodes are in a single block. The ability to start from a nontrivial initial partition is essential for the batching scheme: there, each batch must enforce that only inner nodes — those whose outgoing edges lie entirely within the batch — are eligible for aggregation. This is achieved by initializing with a partition that places every boundary node in its own singleton block, so that refinement within a batch can never merge nodes whose neighborhood information is incomplete.

\textbf{The challenge of floating-point arithmetic.}
Over the reals $\mathbb{R}$, Algorithm~1 is  justified by the
invariance characterization of Eq.~\ref{eq:inv}: if $H$ is not stable, there
exists a real-valued vector $w$ symmetric on $H$ such that $Aw$ is \emph{not}
symmetric on $H$.
A concrete instantiation using logarithms of primes was proposed by \cite{10.5555/2892753.2892817}: their vectors are constructed so that $Aw$ always refines $H$
when $H$ is not stable, and the process terminates at the coarsest stable coloring.

However, floating-point arithmetic introduces numerical errors, and exact comparisons of real-valued products may become unreliable at scale.
Indeed, Section~\ref{sec:reduction} shows that the method of \cite{10.5555/2892753.2892817} fails to
compute the correct coarsest stable coloring on several of our benchmark graphs,
silently merging blocks that should remain distinct.

The natural remedy presented in Algorithm~1 is to work in \emph{integer arithmetic}, which is exact. We therefore perform all computations in the modular ring
$R = \mathbb{Z}/2^{64}\mathbb{Z}$, using unsigned 64-bit integers with
overflow as the arithmetic domain.

\textbf{The ring issue.}
Over the ring $R=\mathbb{Z}/2^{64}\mathbb{Z}$, the invariance
characterization of Eq.~\eqref{eq:inv} can fail in degenerate
cases: nonzero polynomials may have many roots, so a random probe
vector could miss a required refinement. However, our benchmark graphs
satisfy $n\leq 10^9\leq 2^{64/2}$, bounding all row sums of~$A$
by~$2^{64/2}$. This structural constraint allows us to derive a tight
correctness guarantee over the ring via the first isomorphism theorem, 
bypassing the classical Schwartz--Zippel lemma that
requires a field or its generalization (see~\cite{Bishnoi2018OnZeros}), whose assumptions are not satisfied by our ring $R = \ZZ / 2^{64} \ZZ$. 

The next result ensures that a randomly chosen symmetric vector
detects any missing refinement with failure probability at most
$10^{-9}$, provided that row sums of $A$ are bounded by $2^{q/2}$ over the ring $R = \ZZ / 2^{q} \ZZ$ and $q \geq 64$. The bound $2^{q/2} = \sqrt{2^q}$ is informed by the generalized birthday paradox and cannot be relaxed substantially. Proofs and implementation details (pseudorandom generation, sorting, sparse multiplication) are deferred to the paper appendix. 

\begin{algorithm}[t]
\caption{Randomized Stable Coloring (RSC)}
\label{alg:rsc}
\begin{algorithmic}[1]
\Require Adjacency matrix $A \in \mathbb{N}_0^{n\times n}$;
         initial partition $H$ of $V$;
         precision parameter $q \ge 1$ defining
         $R = \mathbb{Z}/2^q\mathbb{Z}$
\Ensure  A stable coloring $H^*$ refining $H$, with error probability $\le 10^{-9}$
\State $H' \leftarrow \emptyset$
\While{$H \ne H'$}
    \Repeat
        \State Sample $w \in R^n$ uniformly at random among vectors symmetric on $H$
    \Until{$w$ attains exactly $|H|$ distinct values}   
    \State Compute $z = Aw \bmod 2^q$
    \State $H' \leftarrow$ coarsest partition refining $H$ on which $z$ is symmetric
    \State $H \leftarrow H'$
\EndWhile
\State \Return $H$
\end{algorithmic}
\end{algorithm}

\begin{theorem}[Randomized Stable Coloring]
\label{thm:rsc}
Let $A \in \mathbb{N}_0^{n\times n}$ with $n \le 2^{q/2}$ and with all row sums
bounded by $2^{q/2}$, where we work over $R = \mathbb{Z}/2^{q}\mathbb{Z}$ with $q \geq 64$.
Then Algorithm~\ref{alg:rsc} computes the coarsest stable coloring refining $H$ with error
probability at most $10^{-9}$.
Its expected worst-case time complexity is $O(nm)$, where $m$ is the number of
nonzero entries of $A$, and its space complexity is $O(m)$.
\end{theorem}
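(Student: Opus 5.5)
The proof has four parts: soundness (never split a block of the coarsest stable coloring refining the input), a per-iteration probability bound, a union bound over iterations, and complexity.

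\textbf{Soundness.} Let $\mathcal{H}^*$ denote the coarsest stable coloring refining the initial partition $H$. We show by induction that every intermediate partition $H$ in Algorithm~\ref{alg:rsc} is coarser than $\mathcal{H}^*$; in particular, no block of $\mathcal{H}^*$ is ever split. Suppose this holds for the current $H$, and let $w$ be symmetric on $H$. Then $w$ is symmetric on $\mathcal{H}^*$, since $\mathcal{H}^*$ refines $H$. By stability of $\mathcal{H}^*$ (Eq.~\eqref{eq:inv}), $z=Aw$ is symmetric on $\mathcal{H}^*$. This uses only the row-sum identity, so it holds verbatim modulo $2^q$. Hence the coarsest refinement of $H$ on which $z$ is symmetric is still coarser than $\mathcal{H}^*$. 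The loop stops when $H'=H$. If at that point $H$ is stable, then $H=\mathcal{H}^*$ by maximality. So the only possible error is stopping at a non-stable $H$.

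\textbf{Per-iteration failure bound.} This is the main obstacle. Suppose $H=\{B_1,\dots,B_k\}$ is not stable, so there exist $i,j\in B$ and a block $B'$ with $\sum_{l\in B'}a_{i,l}\neq\sum_{l\in B'}a_{j,l}$. Write $w=\sum_t c_t\mathbf 1_{B_t}$ with $c\in R^k$. Then
\[
z_i-z_j=\sum_t d_t c_t,\qquad d_t=\sum_{l\in B_t}(a_{i,l}-a_{j,l}).
\]
Each $|d_t|$ is bounded by the row sums, hence by $2^{q/2}<2^q$. The vector $d$ is nonzero over $\mathbb Z$, so some $d_{t_0}\not\equiv0\pmod{2^q}$. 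Let $2^s$ be the largest power of two dividing $d_{t_0}$; then $2^s\le|d_{t_0}|\le 2^{q/2}$. The map $\phi:c\mapsto\sum_t d_tc_t$ is a group homomorphism $R^k\to R$, and its image contains $d_{t_0}R=2^sR$, which has size at least $2^{q-s}\ge 2^{q/2}$. By the first isomorphism theorem, every fiber of $\phi$ has size $|R^k|/|\mathrm{im}\,\phi|$. So for uniform $c$, $\Pr[\phi(c)=0]\le 2^{-q/2}\le 2^{-32}$.

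Two corrections are needed. First, the rejection step conditions on all $c_t$ being distinct. The probability of the conditioning event is at least $1-\binom{k}{2}2^{-q}\ge 1/2$ since $k\le n\le 2^{q/2}$, so conditioning inflates the failure probability by at most a factor $2$. Second, rejection sampling terminates in $O(1)$ expected rounds for the same reason.

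\textbf{Union bound.} Each non-terminal iteration strictly refines $H$, so there are at most $n$ iterations, plus the failing one. The total error is therefore at most $2(n+1)2^{-q/2}$. This bound is not below $10^{-9}$ when $n$ is near $2^{q/2}$; to close the gap I would either exploit $n\le 10^9$ against $q\ge 64$ as in the paper's setting, or tighten the argument by bounding only the single final check. The final check is what actually matters: a missed split at an earlier round is simply retried in later rounds. With this sharper accounting the error is $\le 2^{1-q/2}\cdot$(one check), and one can repeat the final check a constant number of times if needed. The precise constant is where I expect fiddling.

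\textbf{Complexity.} Each round costs $O(n+m)$: sampling takes $O(n)$ expected, the sparse matrix–vector product takes $O(m)$, and extracting the new partition takes $O(n)$ via radix sort of pairs $(\text{block},z_i)$. With at most $n$ rounds this gives $O(nm)$ expected time (assuming $m\ge n$, or counting isolated nodes appropriately). Storage is the sparse matrix plus $O(n)$ vectors, i.e.\ $O(m)$ space.
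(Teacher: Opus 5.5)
\textbf{The gap: the overall $10^{-9}$ bound.} Three parts of your proposal are correct. Your soundness invariant (every intermediate $H$ is coarser than $\mathcal{H}^*$, so the only failure mode is stopping at a non-stable $H$) is sound, and the paper leaves it implicit. Your per-round bound $2^{1-q/2}$ is correct. So is your complexity count. The per-round argument matches the paper in substance: the paper counts roots exactly as $g\,r^{k-1}$ via B\'ezout and uses $\wp(k)/r^k\ge 0.6$, where you lower-bound the image by $d_{t_0}R$ and use $1-\binom{k}{2}2^{-q}\ge 1/2$.

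What is missing is the last step: you never get from the per-round bound to the claimed $10^{-9}$, and neither proposed repair works. With $n\le 10^9$ and $q=64$, your union bound gives $2(n+1)2^{-32}\approx 0.47$. The idea that ``only the final check matters'' is false for Algorithm~\ref{alg:rsc}. An earlier missed split is retried only if that round still produced some other split. A round in which a non-stable $H$ yields no split at all is, by definition, the round where the loop stops ($H'=H$). So each of the up to $n$ rounds is a potential erroneous final check. Re-checking with fresh randomness before stopping would work (two independent checks give at most $2^{q/2}\cdot 2^{2-q}=2^{2-q/2}<10^{-9}$ overall), but that is a different algorithm.

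\textbf{How the paper handles this step, and why it cannot be closed.} The paper uses exactly your final-check reasoning. It writes $\Pr(\text{error})=\sum_t\Pr(\text{error}\mid T=t)\Pr(T=t)$, with $T$ the number of iterations, and asserts $\Pr(\text{error}\mid T=t)\le\varepsilon$. That conditional bound is unjustified, because the event $T=t$ is decided by the randomness of round $t$ itself. For the single edge $1\to 2$ with $H=\{V\}$, $T=1$ occurs exactly when round~1 draws $w=0$, so $\Pr(\text{error}\mid T=1)=1$.

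In fact no argument can give $10^{-9}$ for Algorithm~\ref{alg:rsc} under the stated hypotheses. Take $q=64$, $H=\{V\}$ and the weighted path $a_{i,i+1}=2^{32}$, whose row sums are $2^{32}$. A correct run needs $n-1$ rounds, with round $r$ separating node $n-r+1$ from $\{1,\dots,n-r\}$. That round finds no split iff two distinct block values agree modulo $2^{32}$, which has probability about $2^{-32}$. The overall error is therefore about $1-e^{-n/2^{32}}$: roughly $0.2$ for $n=10^9$, and $1-e^{-1}$ for $n=2^{32}$.

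What your argument honestly establishes is a per-round error of at most $10^{-9}$, which is how the paper's implementation section phrases it, and an overall error of $O(n\,2^{-q/2})$. An overall $10^{-9}$ guarantee needs either stronger hypotheses (e.g.\ $q\ge 128$ when $n\le 10^9$) or a modified stopping rule.
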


We note that the expected worst-case complexity $O(mn)$ is inferior to the optimal
$O(m\log n)$ of classical splitter-based partition refinement.
However, the memory access pattern of Algorithm~\ref{alg:rsc} are
substantially more GPU-friendly, which in practice yields the large speedups (see Section~\ref{sec:experiments}).

\subsection{Batched Stable Coloring}\label{sec:batching}

Computing a stable coloring requires \emph{global} neighborhood
information: deciding whether two nodes belong to the same block
demands comparing their aggregate edge counts toward every block of the
current partition. When the edge set of the graph exceeds device
memory, this global dependency appears to preclude any decomposition
into independently processable subproblems.  We show that a careful
partitioning of the edge set into \emph{batches}, combined with a
distinction between \emph{inner} and \emph{boundary} nodes, yields
local refinements that are provably consistent with the global stable
coloring.

\subsubsection{Inner and boundary nodes.}
Let $G=(V,A)$ be a graph with edge set
$E = \{(i,j) \mid a_{i,j}\neq 0\}$.  Given a partition of $E$ into
$N_b$ disjoint batches $\mathcal{E} = \{E_1,\ldots,E_{N_b}\}$, we
classify nodes according to how their outgoing edges are distributed
across batches.

\begin{definition}[Inner and boundary nodes]
\label{def:inner-boundary}
For an edge partition $\mathcal{E}=\{E_1,\ldots,E_{N_b}\}$ of~$E$,
the \emph{inner nodes} of batch~$E_k$ are
\[
  V_k = \bigl\{\, i\in V \;\bigm|\; \nexists\,(i,j)\in E\setminus E_k \,\bigr\},
\]
i.e., every outgoing edge of $i$ belongs to $E_k$.%
\footnote{Without loss of generality, we assume every node has at
least one outgoing edge; if not, replace $A$ by $A+I$, which
preserves the set of stable colorings.}
The \emph{boundary nodes} are $\mathcal{B} = V \setminus
\bigcup_{k=1}^{N_b} V_k$.
\end{definition}

A key structural property is that the inner-node sets
$V_1,\ldots,V_{N_b}$ are \emph{pairwise disjoint}: if node $i$ has
all outgoing edges in~$E_k$, it cannot have outgoing edges in any
other batch, so $i\notin V_{k'}$ for $k'\neq k$.

\subsubsection{Local refinement with controlled initialization.}
For each batch $E_k$, consider the restricted graph $(V, A|_{E_k})$,
where $A|_{E_k}$ retains only those entries of~$A$ corresponding to
edges in~$E_k$. Applying Algorithm~\ref{alg:rsc} to this restricted
graph can only correctly evaluate the stability condition
(Eq.~\ref{eq:sc}) for inner nodes of batch~$k$,
since only their complete outgoing neighborhoods are available.

This is precisely where the \emph{initial partition} parameter of
Algorithm~\ref{alg:rsc} becomes essential. We initialize each batch
with the partition
\begin{equation}\label{eq:batch-init}
  \mathcal{H}^0_k \;=\; \bigl\{\,V_k\,\bigr\}
    \;\cup\; \bigl\{\,\{i\} \;\bigm|\; i\in V\setminus V_k\,\bigr\},
\end{equation}
which places all inner nodes in a single block and assigns every
non-inner node (including all boundary nodes and inner nodes of other
batches) to its own singleton block. Since Algorithm~\ref{alg:rsc}
can only \emph{refine} its initial partition—splitting blocks but
never merging them—this initialization enforces two invariants:
\begin{enumerate}[label=(\roman*),nosep]
  \item only nodes whose \emph{complete} outgoing neighborhood is
    present in the batch can be merged into the same block;
  \item the refinement within each batch is monotone with respect to
    the global stable coloring.
\end{enumerate}

\subsubsection{Merging local refinements.}
Let $\mathcal{H}_k$ denote the stable coloring returned by
Algorithm~\ref{alg:rsc} on $(V,A|_{E_k})$ with initialization
$\mathcal{H}^0_k$.  We construct a global partition~$\mathcal{H}$ by
collecting, from each batch~$k$, the blocks of~$\mathcal{H}_k$ that
are subsets of~$V_k$, and placing every boundary node in its own
singleton block:
\begin{equation}\label{eq:merge}
  \mathcal{H} \;=\;
    \bigcup_{k=1}^{N_b}\,\bigl\{B\in\mathcal{H}_k \mid B\subseteq V_k\bigr\}
    \;\cup\;
    \bigl\{\{i\} \mid i\in\mathcal{B}\bigr\}.
\end{equation}

\begin{theorem}[Correctness of batched refinement]
\label{thm:batch-correct}
Under the assumptions of Theorem~\ref{thm:rsc}, let $\mathcal{E}$ be
an arbitrary edge partition of~$E$, and let $\mathcal{H}$ be the
partition constructed in~\eqref{eq:merge}. Then $\mathcal{H}$ is a
stable coloring of~$(V,A)$, with error probability at most~$10^{-9}$.
\end{theorem}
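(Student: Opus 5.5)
We first check that $\mathcal{H}$ in \eqref{eq:merge} is a partition of $V$. The inner-node sets $V_1,\ldots,V_{N_b}$ are pairwise disjoint. Since $\mathcal{H}_k$ refines $\mathcal{H}^0_k$, the blocks of $\mathcal{H}_k$ contained in $V_k$ cover $V_k$ exactly. The boundary singletons cover $\mathcal{B}=V\setminus\bigcup_k V_k$. So the union in \eqref{eq:merge} is a disjoint cover of $V$ by nonempty sets.

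Next we condition on the event that every call to Algorithm~\ref{alg:rsc} succeeds. The restricted matrices $A|_{E_k}$ are entrywise bounded by $A$, so the hypotheses of Theorem~\ref{thm:rsc} hold for each of them. Each $\mathcal{H}_k$ is therefore a stable coloring of $(V,A|_{E_k})$ refining $\mathcal{H}^0_k$. For the stated bound $10^{-9}$ we would argue that the per-call failure bound of Theorem~\ref{thm:rsc} is per refinement step and can be made to cover all batches. If that is not available, we would settle for a union bound over batches, or note that the constant comes from a per-call analysis with slack (for $q\ge 64$ the actual bound is about $n^2/2^{q/2}$-type, which is tiny). We flag this probability bookkeeping as a minor point, not the core.

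The core step is stability. Fix blocks $B,B'\in\mathcal{H}$ and nodes $i,j\in B$; we must show $\sum_{l\in B'}a_{i,l}=\sum_{l\in B'}a_{j,l}$. If $B$ is a boundary singleton, then $i=j$ and there is nothing to prove. Otherwise $B\in\mathcal{H}_k$ with $B\subseteq V_k$ for a unique $k$. Since $i,j$ are inner nodes of batch $k$, all their outgoing edges lie in $E_k$, so $a_{i,l}=(A|_{E_k})_{i,l}$ for all $l$, and likewise for $j$. It therefore suffices to show that $B'$ is a union of blocks of $\mathcal{H}_k$; stability of $\mathcal{H}_k$ for $A|_{E_k}$ then gives equality blockwise, and we sum over these blocks. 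We establish this by cases on $B'$.

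\begin{enumerate}[label=(\alph*),nosep]
  \item If $B'\subseteq V_k$, then $B'$ is itself a block of $\mathcal{H}_k$.
  \item If $B'\subseteq V_{k'}$ with $k'\neq k$, or $B'$ is a boundary singleton, then $B'\subseteq V\setminus V_k$. Every node of $V\setminus V_k$ is a singleton in $\mathcal{H}^0_k$, hence also in $\mathcal{H}_k$, so $B'$ is a union of singleton blocks of $\mathcal{H}_k$.
\end{enumerate}

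This case analysis, which shows that $\mathcal{H}$ restricted to $V_k$ is compatible with $\mathcal{H}_k$ while every other block of $\mathcal{H}$ is fragmented into $\mathcal{H}_k$-singletons, is the key step. The only subtle point is that blocks of $\mathcal{H}$ coming from another batch $k'$ may merge nodes that are distinct singletons in $\mathcal{H}_k$. This is harmless, because a sum over a union of $\mathcal{H}_k$-blocks inherits equality from each summand.
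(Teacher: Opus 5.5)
Your proof is correct and follows the paper's argument. Like the paper, you condition on every call to Algorithm~\ref{alg:rsc} succeeding, dispose of boundary singletons trivially, and, for a block $B\subseteq V_\kappa$, combine the stability of $\mathcal{H}_\kappa$ for $A|_{E_\kappa}$ with the fact that inner nodes of batch $\kappa$ have all their outgoing edges in $E_\kappa$. Your case analysis showing that every block of $\mathcal{H}$ is a union of $\mathcal{H}_\kappa$-blocks makes explicit a step the paper uses silently when it writes the restricted-sum equality directly for $B'\in\mathcal{H}$. The paper is no more rigorous on the $10^{-9}$ bound, since it simply points to the total-probability argument of Theorem~\ref{thm:rsc}. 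You should, however, drop your parenthetical claiming that an $n^2/2^{q/2}$-type bound is tiny: with $n$ allowed up to $2^{q/2}$, it is not.
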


\subsubsection{Iterated contraction via quotient graphs.}
A single round of batched refinement yields a valid stable coloring
$\mathcal{H}$ that is, in general, \emph{finer} than the coarsest
stable coloring of~$G$: boundary nodes are conservatively isolated as
singletons, potentially preventing aggregations that the global
algorithm would discover. 

To recover a coarser partition, we iterate.
After each round, we construct the quotient graph $G' =
(\mathcal{H}, A')$ as per Definition~\ref{def:qg}. It
preserves stable-coloring structure: any further refinement of the
original graph can equivalently be performed on~$G'$. Crucially, the
quotient is typically much smaller than the original graph, so
subsequent rounds operate on progressively smaller instances.

Algorithm~\ref{alg:rbsc} iterates this contraction loop until one of
two termination conditions is met:
\begin{enumerate}[label=(\roman*),nosep]
  \item $N_b = 1$: the (reduced) graph fits in a single batch, so
    Algorithm~\ref{alg:rsc} runs on the full graph and returns the
    coarsest stable coloring with high probability; or
  \item no further reduction occurs ($|V|$ is unchanged), indicating
    a fixed point of the batched contraction.
\end{enumerate}
In case~(i), the result is provably the coarsest stable coloring.  In
case~(ii), the result is a valid stable coloring that may be strictly
finer. Our experiments (Section~\ref{sec:experiments}) show that, in
practice, the gap is small: partitions by multiple batches
are consistently within 5\% of the coarsest stable coloring.

The merged partition~$\mathcal{H}$ in~\eqref{eq:merge} is
independent of batch processing order, since each batch induces a
monotone refinement and the merge computes their meet on the partition
lattice. The for loop at line~7 of Algorithm~\ref{alg:rbsc} is
embarrassingly parallel: batches operate on disjoint edge subsets and
are scheduled concurrently across GPUs in our implementation.

\begin{algorithm}[t]
\caption{Randomized Batched Stable Coloring (RBSC)}
\label{alg:rbsc} 
\begin{algorithmic}[1]
\Require Adjacency matrix $A\in\mathbb{N}_0^{n\times n}$;
  precision $q\geq 1$ defining $R=\mathbb{Z}/2^q\mathbb{Z}$;
  maximum batch size $M_b\in\mathbb{N}$
\Ensure A stable coloring of $(V,A)$
\Repeat
  \State $\mathcal{H}\gets\varnothing$
  \State $E\gets\{(i,j)\mid a_{i,j}\neq 0\}$;\quad
         $n\gets|V|$;\quad $m\gets|E|$
  \State $N_b\gets\lceil m/M_b\rceil$
  \State Partition $E$ into $E_1,\ldots,E_{N_b}$
         with $|E_k|\leq M_b$
  \State Compute $V_1,\ldots,V_{N_b}$ per
         Definition~\ref{def:inner-boundary}
  \For{$k=1,\ldots,N_b$} 
    \State $\mathcal{H}^0_k \gets \{V_k\}\cup\{\{i\}\mid
           i\in V\setminus V_k\}$
    \State $\mathcal{H}_k\gets
           \textsc{RSC}(V,\,A|_{E_k},\,\mathcal{H}^0_k)$
    \State $\mathcal{H}\gets\mathcal{H}\cup
           \{B\in\mathcal{H}_k\mid B\subseteq V_k\}$
  \EndFor
  \State $\mathcal{B}\gets V\setminus\bigcup_{B\in\mathcal{H}} B$
  \State $\mathcal{H}\gets\mathcal{H}\cup
         \{\{i\}\mid i\in\mathcal{B}\}$
  \State Construct quotient $(V,A)\gets
         (\mathcal{H},A')$ (Def.~\ref{def:qg})
\Until{$N_b=1$ \textbf{or} $|V|=n$}
\State \Return $\mathcal{H}$
\end{algorithmic}
\end{algorithm}

\begin{theorem}[Complexity of RBSC]\label{thm:rbsc-complexity}
Under the assumptions of Theorem~\ref{thm:rsc}, the expected
worst-case time complexity of Algorithm~\ref{alg:rbsc} is $O(n^2 m)$,
and its space complexity is $O(m)$.
\end{theorem}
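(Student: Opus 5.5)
The plan is to bound two quantities separately: the number of outer iterations of the \textbf{repeat} loop of Algorithm~\ref{alg:rbsc}, and the cost of a single iteration. The time bound $O(n^2 m)$ should then be $O(n)$ rounds times $O(nm)$ work per round, where $n$ and $m$ are the size parameters of the \emph{original} graph.

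\textbf{Bounding the number of rounds.} Since the loop exits when $|V|=n$, every round that does not terminate strictly decreases the number of nodes of the current (quotient) graph. By Definition~\ref{def:qg}, the nodes of the new graph are the blocks of $\mathcal{H}$. Because $\mathcal{H}$ is a partition of the current node set, this number never exceeds the current $n$. Hence the node count decreases by at least one in every non-final round, so there are at most $n$ rounds.

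\textbf{Bounding the cost per round.} Let $n'\le n$ and $m'$ be the current node and edge counts. I would first argue that $m'\le m$ holds throughout, since it is needed for both bounds. Each nonzero entry $a'_{B,B'}$ of the quotient is, for a fixed representative $i\in B$, a sum of entries $a_{i,l}$ with $l\in B'$. At least one of these is nonzero, so rows of $A'$ inject into the rows of representatives and $m'\le m$. The row-sum and $n\le 2^{q/2}$ hypotheses of Theorem~\ref{thm:rsc} must also persist on quotients, since row sums are preserved by Eq.~\eqref{eq:sc}. If the footnote's $A+I$ device is applied, the self-loops add at most $n'$ entries, which is harmless.

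Within a round, computing the edges, the batches and the sets $V_k$ costs $O(n'+m')$. Batch $k$ invokes RSC on $(V,A|_{E_k})$. By Theorem~\ref{thm:rsc} this costs expected $O(n'\,|E_k|)$, taking the node dimension as $n'$ and the nonzeros as $|E_k|$. Summing over $k$ and using that the $E_k$ are disjoint gives $O(n'\sum_k|E_k|)=O(n'm')\subseteq O(nm)$. The merge and the quotient construction are linear, costing $O(n'+m')$ by a single pass over the edges with block labels. Multiplying by $n$ rounds yields $O(n^2m)$.

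\textbf{Space.} Each batch uses $O(|E_k|+n')$ space by Theorem~\ref{thm:rsc}, and quotients have at most $m$ entries. Assuming every node has an outgoing edge (footnote), we have $n\le m$, so the total space is $O(m)$.

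\textbf{Main obstacle.} The delicate step is the per-batch RSC cost. Each batch runs on the full vertex set $V$, not only on $V_k$, so one must check that the $O(n'|E_k|)$ bound, and not something like $O(n'^2)$ per batch, is what Theorem~\ref{thm:rsc} delivers. The concern is that there could be up to $m'$ batches, each touching $n'$ singleton-initialized nodes. I would handle this by noting that RSC's refinement rounds on batch $k$ are bounded by the number of splits of the single non-singleton block $V_k$, which is at most $|V_k|$. Each round then costs $O(|E_k|+|V_k|)$ if singletons are not rematerialized.

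Even without this refinement, the crude bound suffices. Charging $O(n')$ per refinement round and $|V_k|$ rounds per batch gives $\sum_k |V_k|\,(|E_k|+n') \le n'(m'+n')$, because the $V_k$ are pairwise disjoint. This is again $O(nm)$ per outer round, since $n\le m$.
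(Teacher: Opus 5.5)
Your proposal is correct and follows the paper's proof: at most $n$ outer rounds, since each non-final round strictly shrinks the node set; expected $O(nm)$ time per round, because the RSC calls run on the disjoint edge sets $E_k$ with $\sum_k |E_k| = m$; and $O(m)$ space. Your extra step, bounding the RSC rounds of batch $k$ by $|V_k|$ and summing via the disjointness of the $V_k$, is more careful than the paper, which just invokes Theorem~\ref{thm:rsc} per batch and so tacitly needs its simplification $n\log n \le m$ to hold for every $|E_k|$ (batches with $V_k=\emptyset$, which your count of $|V_k|$ rounds misses, cost one round each or can simply be skipped, and are harmless).
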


\section{Numerical Experiments}\label{sec:experiments}

Our two algorithmic contributions---randomized linear-algebraic
refinement (Section 4.1) and correctness-preserving batching
(Section 4.2)---address orthogonal bottlenecks: the
former targets compute scalability (speed), the latter memory
scalability (capacity).  We evaluate them along three corresponding
axes:
\begin{enumerate}[label=(\roman*),nosep]
  \item \emph{Correctness and reduction quality}: does the randomized
    algorithm recover the coarsest stable coloring, and how close does
    batching come to it?
  \item \emph{Runtime efficiency}: how do GPU-accelerated
    implementations compare to state-of-the-art CPU baselines?
  \item \emph{Scalability to massive graphs}: can we compute stable
    colorings on instances whose edge sets far exceed the memory of a
    single device?
\end{enumerate}

\subsection{Experimental Set-up}
\label{sec:setup}

\subsubsection{Datasets.}
We use eleven medium- to large-scale graphs from established
collections: Flickr, Yelp, and Reddit from the PyTorch Geometric
dataset collection~\cite{DBLP:journals/corr/abs-1903-02428}, and eight web graphs from the
WebGraph framework~\cite{BoVWFI,BCSU3,BRSLLP,BMSB}.  These range from fewer than
$10^5$ nodes to over $4\times 10^7$ nodes and $1.5\times 10^9$ edges
(Table~\ref{tab:tablereduction}).  For the scalability experiments of
Section~\ref{sec:massive}, we additionally consider three web-scale
graphs (\texttt{gsh-2015}, \texttt{clueweb12}, \texttt{uk-2014}) with
$33$--$48$ billion edges, for which no baseline is able to compute a
stable coloring.

\subsubsection{Implementation.}
\label{sec:implementation}
Our algorithms are implemented in C++ with CUDA for GPU
kernels and OpenMP for scheduling batches across devices.
Sparse matrix-vector multiplications use device-wide reduction
primitives from the CUB library.  Random vectors are generated with
64-bit pseudorandom number generators (xoshiro256**
with SplitMix64 seeding~\cite{10.1145/3460772,steele14}
), ensuring a per-iteration error probability below
$10^{-9}$ (Theorem~\ref{thm:rsc}).

\subsubsection{Baselines.}
We compare against three baselines. (i) \textbf{PR}: the partition-refinement algorithm of~\cite{DBLP:conf/tacas/ValmariF10}, 
    executed as a
    single-threaded CPU program.  This implements the classical
    $O(m\log n)$ splitter-based approach. (ii) \textbf{RSC-CPU}: our own randomized
    Algorithm~\ref{alg:rsc}, executed on a single CPU core without
    batching, to isolate the effect of GPU acceleration from the
    algorithmic reformulation. (iii) \textbf{SR$_c$}: the multicore symbolic reduction
    of~\cite{DBLP:journals/entcs/BlomHKP08,DBLP:conf/tacas/DijkP16}, using $c\in\{32,64\}$
    CPU cores.
\subsubsection{Hardware.}
All experiments were conducted on a server equipped with a 192-core
AMD EPYC~7R13 CPU, 2\,TiB of RAM, and 8 NVIDIA H200 GPUs
(141\,GB VRAM each), running Ubuntu~24.04.

\subsubsection{Metrics.}
We report three quantities: (i)~\emph{reduction power}, defined as
the ratio $|\mathcal{H}|/|V|$ between the number of blocks in the
computed stable coloring and the number of nodes in the input graph
(lower is better); (ii)~wall-clock runtime excluding I/O; and
(iii)~success or failure within a fixed timeout (3\,600\,s for medium-scale
graphs, 10\,800\,s for web-scale graphs).

\subsubsection{Batch configurations.}
We evaluate Algorithm~\ref{alg:rbsc} with $N_b\in\{1,2,4\}$ batches.
The single-batch setting ($N_b=1$) processes the full edge set at
once, so RBSC reduces to the batch-free Algorithm~\ref{alg:rsc} and
returns the coarsest stable coloring with high probability.  Settings
$N_b=2$ and $N_b=4$ simulate scenarios where only a fraction of the
edge set fits in device memory.

\subsubsection{Device usage.}
When $N_b=1$, a single batch is processed on a single GPU.  For
$N_b>1$, the resulting batches are scheduled
concurrently across available GPUs (one batch per device), with merge
and quotient construction performed on the host between outer
iterations. When $N_b=1$, execution uses a single GPU. For $N_b>1$, batches run concurrently across devices (one per GPU), with host-side merge between iterations. Processing the full graph in one batch is fastest when VRAM permits; multi-GPU scheduling compensates when it does not.

\subsection{Correctness and Reduction Quality}
\label{sec:reduction}

Before evaluating performance, we verify that the randomized algorithm
is empirically correct and that batching preserves partition quality.
\begin{table*}[t!]
    \centering
    \begin{tabular}{rll @{\hspace{1em}}c @{\hspace{2em}}c @{\hspace{1em}}c} \toprule
    &&& \multicolumn{3}{c}{\textbf{Randomized Batched SC (RBSC)}} \\
    \cmidrule(l){4-6}
    \multicolumn{1}{c}{\emph{No.}} & \emph{Dataset Name} & \emph{Size} & \multicolumn{1}{c}{$N_b = 1$} & \multicolumn{1}{c}{$N_b = 2$} & \multicolumn{1}{c}{$N_b = 4$} \\ 
    &&& (coarsest) \\ \midrule
    1 &    {  flickr}  & \begin{tabular}{@{}l@{}}   $n$: 89,250 \\   $m$: 899,756 \end{tabular} & 90.91\% & 91.26\% & 91.80\% \\ \midrule
    2 &    {  cnr-2000}  & \begin{tabular}{@{}l@{}}   $n$: 325,557 \\   $m$: 3,216,152 \end{tabular} & 26.24\% & 26.24\% & 26.24\% \\ \midrule
    3 &    {  yelp}  & \begin{tabular}{@{}l@{}}   $n$: 716,847 \\   $m$: 13,954,819 \end{tabular} & 92.47\% & 94.73\% & 96.24\% \\ \midrule
    4 &    {  ljournal-2008}  & \begin{tabular}{@{}l@{}}   $n$: 5,363,260 \\   $m$: 79,023,142 \end{tabular} & 80.09\% & 80.38\% & 80.56\% \\ \midrule
    5 &    {  hollywood-2009}  & \begin{tabular}{@{}l@{}}   $n$: 1,139,905 \\   $m$: 113,891,327 \end{tabular} & 53.28\% & 53.28\% & 53.28\% \\ \midrule
    6 &    {  reddit}  & \begin{tabular}{@{}l@{}}   $n$: 232,965 \\   $m$: 114,615,892 \end{tabular} & 93.45\% & 94.85\% & 96.07\% \\ \midrule
    7 &    {  enwiki-2024}  & \begin{tabular}{@{}l@{}}   $n$: 6,790,971 \\   $m$: 172,762,484 \end{tabular} & 91.20\% & 91.52\% & 91.78\% \\ \midrule
     8 &   {  eu-2015-host}  & \begin{tabular}{@{}l@{}}   $n$: 11,264,052 \\   $m$: 386,915,963 \end{tabular} & 39.50\% & 39.50\% & 39.68\% \\ \midrule
     9 &   {  arabic-2005}  & \begin{tabular}{@{}l@{}}   $n$: 22,744,080 \\   $m$: 639,999,458 \end{tabular} & 29.88\% & 29.88\% & 30.07\% \\ \midrule
     10 &   {  it-2004}  & \begin{tabular}{@{}l@{}}   $n$: 41,291,594 \\   $m$: 1,150,725,436 \end{tabular} & 30.00\% & 30.00\% & 30.44\% \\ \midrule
     11 &   {  twitter-2010}  & \begin{tabular}{@{}l@{}}   $n$: 41,652,230 \\   $m$: 1,468,365,182 \end{tabular} & 86.92\% & 87.79\% & 87.85\% \\ \midrule
    \end{tabular}
    \caption{Batching preserves reduction power: gap to the stable coloring (corresponding to the batch-free case $N_b =1$) is below 5\%.}
    \label{tab:tablereduction}
\end{table*}

\begin{table*}[t!]
\centering
\begin{tabular}{lrr}
\toprule
\emph{Dataset} & \multicolumn{1}{c}{\emph{True}} & \multicolumn{1}{c}{\emph{Computed}} \\
\midrule
ljournal-2008 & 4,295,622 & 4,295,308 \\
eu-2015-host & 4,449,148 & 4,443,240 \\
arabic-2005 & 6,795,925 & 6,639,368 \\
it-2004 & 12,385,929 & 11,582,482 \\
twitter-2010 & 36,204,326 & 35,921,554 \\
\bottomrule
\end{tabular}
\caption{As the scale of the graph increase the partition computed by the power-iteration method of ~\cite{10.5555/2892753.2892817} starts to diverge from the True coarsest stable coloring due to floating-point precision loss.}
\label{tab:tableinstability}
\end{table*}

\subsubsection{Randomized refinement recovers the coarsest stable
coloring.}
With $N_b=1$, Algorithm~\ref{alg:rbsc} reduces to
Algorithm~\ref{alg:rsc} applied to the full graph.
Theorem~\ref{thm:rsc} guarantees that the result is the coarsest
stable coloring with probability at least $1-10^{-9}$.  We confirm
this empirically: on all eleven benchmark graphs, the partition
returned by RBSC at $N_b=1$ matches, block for block, the
coarsest stable coloring computed by the deterministic baseline~PR (Table~\ref{tab:tablereduction}).
This validates both the probabilistic guarantee and the
integer-arithmetic implementation.

\subsubsection{Numerical instability of floating-point approaches.}
Table~\ref{tab:tableinstability} compares the sizes of the coarsest stable coloring
(computed by PR) with the partition computed by the power-iteration
method of~\cite{10.5555/2892753.2892817}, which operates in
floating-point arithmetic.  On smaller graphs, the two coincide; as
scale increases, however, floating-point precision loss causes the
power-iteration method to \emph{under}-refine, merging nodes that
should be distinguished.  On \texttt{it-2004}, the discrepancy exceeds
$8\times 10^5$ blocks. This numerical instability is the reason we do not include the power-iteration method of~\cite{10.5555/2892753.2892817} as a runtime baseline.

\subsubsection{Batching preserves partition quality.}
Table~\ref{tab:tablereduction} reports reduction power for RBSC at three
batch sizes.  By Theorem~\ref{thm:batch-correct}, every partition
returned by RBSC is a valid stable coloring of the original graph;
however, when $N_b>1$, boundary nodes are conservatively isolated as
singletons (Section~\ref{sec:batching}), so the result may be
strictly finer than the coarsest stable coloring.  The key empirical
question is whether this conservatism matters in practice.

Across all eleven datasets, the partitions obtained at $N_b=2$
and $N_b=4$ are within $5\%$ of the coarsest stable coloring
(the $N_b=1$ column).  On several graphs
(\texttt{cnr-2000}, \texttt{hollywood-2009}, \texttt{arabic-2005},
\texttt{it-2004}), the gap is zero or negligible, indicating that the
iterated quotient contraction fully recovers the coarsest partition
even with multiple batches.  The largest gaps occur on dense social
graphs (\texttt{yelp}, \texttt{reddit}), where many nodes likely have edges
spanning multiple batches. 

\subsection{Runtime Efficiency}
\label{sec:runtime}

\begin{table*}[t]
\centering
\begin{tabular}{@{}rl rr rr rrr@{}}
\toprule
& & \multicolumn{2}{c}{\textbf{Single core}} &
  \multicolumn{2}{c}{\textbf{Multicore}} &
  \multicolumn{3}{c}{\textbf{GPU (RBSC)}} \\
\cmidrule(lr){3-4}\cmidrule(lr){5-6}\cmidrule(l){7-9}
& \textbf{Dataset}
  & PR & RSC-CPU & SR$_{32}$ & SR$_{64}$
  & $N_b{=}1$ & $N_b{=}2$ & $N_b{=}4$ \\
\midrule
1  & {flickr}           & 0.15  & 0.05 & 0.60   & 0.66  & \best{0.01} & 0.06  & 0.09  \\
2  & {cnr-2000}         & 0.27  & 5.01  & 8.92   & 15.09  & \best{0.10} & 0.16  & 0.24  \\
3  & {yelp}             & 4.60  & 0.97  & 19.56  & 13.30  & \best{0.05} & 0.75 & 1.23  \\
4  & {ljournal-2008}    & 34.56 & 40.10 & 181.71 & 175.15 & \best{0.39} & 5.14 & 6.64 \\
5  & {hollywood-2009}   & 21.57 & 2.43 & 109.61 & 67.63  & \best{0.30} & 1.83 & 2.10  \\
6  & {reddit}           & 27.16 & 2.00  & 86.86  & 55.28  & \best{0.25} & 3.67 & 4.49  \\
7  & {enwiki-2024}      & 91.74 & 57.17 & T/O    & T/O    & \best{0.66} & 7.22 & 10.98 \\
8  & {eu-2015-host}     & 62.80 & 189.81 & T/O    & T/O    & \best{1.76} & 3.60 & 10.56 \\
9  & {arabic-2005}      & 72.01 & 3276.20 & T/O    & T/O    & \best{14.54} & 17.44 & 39.84 \\
10 & {it-2004}          & 143.79& T/O   & T/O    & T/O    & \best{121.39} & 133.29 & 190.50 \\
11 & {twitter-2010}     & 935.14& 605.15 & T/O    & T/O    & \best{6.73} &  205.60 & 176.01\\
\bottomrule
\end{tabular}
\caption{Runtime comparison (wall-clock seconds, excluding I/O).
  \textbf{PR}: single-threaded classical partition
  refinement~\cite{DBLP:conf/tacas/ValmariF10}.  \textbf{RSC-CPU}: our
  randomized Algorithm~\ref{alg:rsc} on one CPU core.
  \textbf{SR$_{32}$/SR$_{64}$}: multicore symbolic bisimulation
  reduction~\cite{DBLP:journals/entcs/BlomHKP08,DBLP:conf/tacas/DijkP16} on 32/64
  cores.  \textbf{RBSC}: our Algorithm~\ref{alg:rbsc} on GPU(s);
  $N_b=1$ uses 1~GPU, $N_b=2$ and $N_b=4$ run batches in
  parallel across multiple GPUs.  T/O indicates timeout at 3\,600\,s.
  Best result per dataset highlighted in \best{green}.}
\label{tab:runtime}
\end{table*}

Table~\ref{tab:runtime} reports wall-clock execution times. 

\textbf{Randomized refinement vs.\ classical partition refinement
(CPU).}
\label{sec:numerical}
Comparing RSC-CPU (our Algorithm~\ref{alg:rsc} on a single CPU core)
with PR (classical splitter-based refinement, also single-threaded)
reveals that the linear-algebraic formulation is often faster in
practice despite its higher worst-case complexity ($O(nm)$ vs.\
$O(m\log n)$).  On five of eleven datasets, RSC-CPU outperforms PR,
sometimes by an order of magnitude (\texttt{flickr}: $3\times$,
\texttt{reddit}: $13\times$). 

\textbf{GPU acceleration.}
RBSC at $N_b=1$ (single GPU) consistently outperforms both
single-core CPU baselines, achieving speedups of up to ${\sim}138\times$
over PR and up to ${\sim}225\times$ over RSC-CPU
(\texttt{arabic-2005}). These gains reflect the natural fit between
sparse matrix--vector multiplication and GPU hardware: the
computation is data-parallel, and memory
 bound—precisely the regime where modern high memory bandwidth GPUs excel. 

\textbf{Comparison with multicore baselines.}
On the six datasets where the multicore bisimulation
SR$_{64}$ completed within the timeout, RBSC on a single GPU is
$66\times$--$449\times$ faster.  SR$_{32}$ and SR$_{64}$ timed out on
all remaining datasets.  These results indicate that multicore
partition-refinement approaches, which rely on irregular memory access
patterns and fine-grained synchronization, do not bridge the
performance gap for stable-coloring computations; the linear-algebraic
reformulation is essential for exploiting parallel hardware
effectively.

\textbf{Effect of batch size on runtime.}
Comparing the $N_b=1$, $2$, and $4$ columns of
Table~\ref{tab:runtime} reveals a consistent pattern: processing the
full graph in a single batch is fastest whenever it fits in VRAM.
Reducing the batch size incurs two overheads: (i)~the iterated
quotient contraction requires multiple outer-loop iterations, and
(ii)~boundary-node singletons increase the partition size, slowing
convergence.  Multi-GPU scheduling partially compensates—batches
execute concurrently—but does not fully offset these costs.  The
practical recommendation is therefore straightforward: \emph{use the
largest batch size that fits in device memory}.  Smaller batches
remain viable when VRAM is constrained, though runtime penalties vary widely across datasets ($1.6\times$--$26\times$ at $N_b=4$ vs.\ $N_b=1$), depending on the graph's boundary-node fraction and contraction behavior.

\subsection{Scalability to Massive Graphs}
\label{sec:massive}
\begin{table}[t]
\centering
\small
\begin{tabular}{@{}l l cc cc@{}}
\toprule
& & \multicolumn{2}{c}{$M_b=1.8$\,B} &
    \multicolumn{2}{c}{$M_b=4.3$\,B} \\
\cmidrule(lr){3-4}\cmidrule(l){5-6}
\textbf{Dataset} & \textbf{Size}
  & Red. & Time\,(s)
  & Red. & Time\,(s) \\
\midrule
gsh-2015
  & \scriptsize{989M\,/\,34B}
  & 39.0\% & 837.90
  & 38.8\% & 492.94 \\
{clueweb12}
  & \scriptsize{978M\,/\,43B}
  & 46.4\% & 2\,464
  & 46.1\% & 2\,174 \\
{uk-2014}
  & \scriptsize{788M\,/\,48B}
  & 42.8\% & 5\,195
  & 42.7\% & 3\,143 \\
\bottomrule
\end{tabular}
\caption{Stable coloring on massive web-scale graphs (8~GPUs).  Two
  per-device batch caps are shown: $M_b=1.8$\,B and $M_b=4.3$\,B
  edges.  All CPU baselines (PR, RSC-CPU, SR) timed out or ran out of
  memory.  RBSC is the only method that completes within resource
  limits.  Larger batch caps improve both reduction power and runtime.}
\label{tab:massive}

\end{table}

The primary motivation for batching is not multi-GPU parallelism but
\emph{memory scalability}: enabling stable-coloring computation on
graphs whose edge sets do not fit in the memory of any single device.
Table~\ref{tab:massive} reports results on three web-scale graphs with
$33$--$48$ billion edges, using per-device batch caps of $M_b=1.8$\,B
and $M_b=4.3$\,B edges and concurrent scheduling across all 8~GPUs.

\textbf{Feasibility.}
All CPU baselines (PR, RSC-CPU, SR) either timed out or exhausted
available memory on these instances. So, we cannot directly measure the gap to the optimum. RBSC is the only method that
completes within resource limits, demonstrating that the combination
of randomized refinement and batching enables stable-coloring analysis
at scales that were previously infeasible.  For these experiments, we
employ a more aggressive termination condition: the outer loop of
Algorithm~\ref{alg:rbsc} stops as soon as the number of batches in
the next iteration would not decrease, avoiding diminishing-returns
iterations.

\textbf{Effect of per-device batch cap.}
Increasing the cap from $1.8$\,B to $4.3$\,B edges (the maximum
that fits in VRAM on our H200 GPUs) improves both quality and runtime.
Reduction power improves by $0.1$--$0.3$ percentage points, and
wall-clock time decreases by $1.14$--$1.69\times$.  The improvement
is explained by the same mechanism identified on medium-scale
graphs: larger batches reduce the fraction of boundary nodes, leading
to more effective aggregation in each round and faster convergence of
the quotient contraction. 

\section{Conclusion}
We introduced a GPU-amenable reformulation of 1-WL stable coloring, combining randomized linear-algebraic refinement over $\mathbb{Z}/2^{64}\mathbb{Z}$ with correctness-preserving batching for graphs beyond device memory. The randomized algorithm gives probabilistic guarantees and avoids the numerical instability of floating point methods at scale. The batching scheme decomposes the edge set into independently processable subgraphs whose refinements provably yield a global stable coloring; on all benchmarks, batched partitions stay within 5\% of the coarsest stable coloring. A single-GPU implementation achieves speedups of up to ${\sim}138\times$ over classical partition refinement and ${\sim}449\times$ over multicore baselines. On web-scale graphs with over 30 billion edges, where CPU baselines fail, our method computes stable colorings with near-complete reduction, enabling 1-WL expressiveness analysis at this scale. 

The correspondence between 1-WL stable coloring and equitable partitions suggests future work into leveraging our method for downstream graph ML tasks, e.g. \cite{DBLP:conf/icdm/SquillaceTTV24}; the relationship with linear invariance in dynamical systems, instead, offers opportunities for lifting randomized GPU-based techniques to massive-scale model-reductions for nonlinear systems~\cite{cardelli2017erode,doi:10.1073/pnas.1702697114}.

%
%
%
\bibliographystyle{splncs04}
\bibliography{ref}

%
%
%
%
%
\newpage

\onecolumn

\title{Scaling Weisfeiler–Leman Expressiveness Analysis to Massive Graphs with GPUs \\ (Appendix)}
\author{Filippo Biondi\inst{1} \corr \and Mirco Tribastone\inst{1} \and Max Tschaikowski\inst{2}}
\institute{
IMT Lucca, Italy\\
\email{\{filippo.biondi,mirco.tribastone\}@imtlucca.it}
\and
Sapienza University Rome, Italy\\
\email{max.tschaikowski@uniroma1.it}
}

\maketitle

\appendix

\section{Proofs}
\label{app:proofs}

\begin{proof}[Theorem~\ref{thm:rsc}]
Define the degree one polynomials $p_i$ in $k = |H|$ variables by setting
\begin{align*}
p_i(w^H) & = \sum_{B \in H} a_{i,B} w_B^H, \qquad \text{with} \quad 
a_{i,B} = \sum_{l \in B} a_{i,l} , \quad B \in H, \ w^H \in \RE^H .
\end{align*}
Let us assume that there exists a $w \in \RE^n$ that is symmetric on $H$ and for which $(Aw)_i \neq (Aw)_j$. This implies that there is a $w^H \in \RE^k$ such that $p_i(w^H) \neq p_j(w^H)$, i.e., the degree one polynomial $p_{i,j} = p_i - p_j$ in $k$ variables is not identically zero over $\RE^k$. By assumption on the row sums of $A$, it holds that $0 \leq a_{l,B} \leq 2^{q/2}$ for all $B \in H$ and $1 \leq l \leq n$. This, in turn, implies that $p_{i,j}$ is not identically zero over $R = \ZZ / 2^{q} \ZZ$ either. This is because for any pair $a_{i,B} \neq a_{j,B}$ (over $\RE$), it holds that $\gcd(a_{i,B} - a_{j,B},2^{q}) = 2^t$ with $t \leq q/2$. By Lemma~\ref{lem:group}, this also implies that the number of roots of $p_{i,j}$ in $R^k$ is at most $2^{q/2} (2^q)^{k-1}$. Consequently, picking a random $w^H \in R^k$, the probability of hitting a root of $p_{i,j}$ in $R^k$ is at most
\[
2^{q/2} (2^q)^{k-1} / (2^q)^k = 2^{q/2} / 2^{q} = 2^{-q/2} \leq 10^{-9} 
\]
if $q \geq 64$. Recall that we wish to estimate the probability of hitting a root of $p_{i,j}$ in $R^k$ by picking randomly a $w^H \in P(k)$, where $P(k)$ denotes all $w^H \in R^k$ with pairwise different coordinates. By Lemma~\ref{eq:lemma:bound}, we infer that $|P(k)| \geq \tfrac{6}{10} r^k$. Consequently, picking a random $w^H \in P(k)$, the probability of hitting a root of $p_{i,j}$ is at most 
\[
2^{q/2} (2^q)^{k-1} / |P(k)| \leq 2^{q/2} / (\tfrac{6}{10} \cdot 2^{q}) \leq 10^{-9} 
\]
for $q \geq 64$. Overall, we can bound the error probability that a randomly picked $w^H \in P(k)$ happens to be a root of $p_{i,j}$, even though $p_{i,j}$ itself is not zero (hence, coordinates $i,j$ should be refined). Thus, if $|H|$ is not a BE, there will be at least one such pair $i,j$, and the error probability of missing a split is at most $\varepsilon = 10^{-9}$. With this crucial insight, we can apply the law of total probability, bounding the overall error probability of the algorithm:
\[
\Pr(\text{error})
= \sum_{t=1}^{n} \Pr(\text{error} \mid T = t)\Pr(T = t) \leq \sum_{t=1}^{n} \varepsilon  \Pr(T = t) = \varepsilon \cdot 1 ,
\]
where $T \in \{1,\ldots,n\}$ is the random variable describing the number of while loop iterations.

In the last part, we estimate the worst case complexity of the algorithm. 
Since each refinement step is obtained by means of a matrix-vector product, a single refinement can be done in $\calO(m)$ (to simplify presentation, we assume without loss of generality that that $n \log(n) \leq m$). Instead, the number of steps needed to sample a random vector $w \in R^n$ is $\calO(n)$, while checking whether it has $|H|$ different values can be done in $\calO(n \log(n))$ by means of sorting. Since the expected number of such samplings can be bounded by $2 = (\tfrac{1}{2})^{-1}$ via Lemma~\ref{eq:lemma:bound}, the expected worst case complexity of one while loop iteration is thus $\calO(m)$. Since there can be at most $n$ refinement steps, the entire algorithm needs at most $\calO(nm)$ steps. For the space complexity, instead, we note that partitions can be compactly stored as vectors of length $n$ and the coarsest partition on which two given vectors are symmetric can be computed via lexicographic sorting in $\calO(n \log(n))$. This completes the proof.  
\end{proof}

\begin{lemma}\label{lem:group}
Let $R := \mathbb{Z}/ r \mathbb{Z}$ with $r = 2^q$ and let $k \ge 1$. Consider a linear polynomial $p(x_1,\dots,x_k) = a_1 x_1 + \ldots + a_k x_k$ over $R$. Fix $t \geq 1$ so that it holds $\gcd(a_1,\ldots,a_k,2^q) = g = 2^t$. Then, $p$ has $2^t r^{k-1}$ roots in $R$.
\end{lemma}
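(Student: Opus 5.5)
The plan is to use the first isomorphism theorem for abelian groups, as the paper announces. We view $p$ as a group homomorphism $\phi : R^k \to R$, $\phi(x) = a_1x_1+\dots+a_kx_k$, of additive groups. The roots of $p$ are exactly $\ker\phi$. The first isomorphism theorem gives $R^k/\ker\phi \cong \operatorname{im}\phi$, so
\[
|\ker\phi| = \frac{|R^k|}{|\operatorname{im}\phi|} = \frac{r^k}{|\operatorname{im}\phi|}.
\]
The task therefore reduces to showing $|\operatorname{im}\phi| = r/2^t = 2^{q-t}$.

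To compute the image, I will show that $\operatorname{im}\phi$ is the ideal (equivalently, the cyclic subgroup) $gR$ generated by $g = \gcd(a_1,\dots,a_k,2^q) = 2^t$.

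\emph{Inclusion $\operatorname{im}\phi \subseteq gR$.} Each $a_i$ is divisible by $g$ when viewed as an integer representative. This is well defined modulo $2^q$, since $g \mid 2^q$. Hence every value $\sum a_i x_i$ is a multiple of $g$ modulo $2^q$.

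\emph{Inclusion $gR \subseteq \operatorname{im}\phi$.} By B\'ezout's identity over $\ZZ$ there are integers $c_1,\dots,c_k,c_0$ with $\sum c_i a_i + c_0 2^q = g$. Reducing modulo $2^q$ gives $\phi(c_1,\dots,c_k) = g$. Since $\operatorname{im}\phi$ is a subgroup, it then contains all multiples of $g$.

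Finally, the subgroup of $\ZZ/2^q\ZZ$ generated by $2^t$ has order $2^q/2^t$, because $2^t \mid 2^q$. Substituting gives $|\ker\phi| = r^k \cdot 2^t / r = 2^t r^{k-1}$, which is the claim.

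The only subtle point, and the one I would check most carefully, is the well-definedness of the gcd. The $a_i$ are ring elements, so $\gcd(a_1,\dots,a_k,2^q)$ must be read via integer representatives, and it is independent of the choice of representatives precisely because $2^q$ is included in the gcd. I would also note two edge cases. The hypothesis $t\ge 1$ is not actually needed: $t=0$ works the same way, giving a surjective $\phi$. If all $a_i \equiv 0$, then $t=q$ and the formula correctly gives $r^k$ roots.
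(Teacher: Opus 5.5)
Your proposal is correct and follows essentially the same route as the paper: identify the roots with $\ker\varphi$, show $\operatorname{im}\varphi = g(\mathbb{Z}/r\mathbb{Z})$ by divisibility and B\'ezout, and conclude with the first isomorphism theorem. The only difference is cosmetic: the paper obtains $|g(\mathbb{Z}/r\mathbb{Z})| = r/g$ from a second application of the first isomorphism theorem to $x \mapsto gx$, whereas you cite the order of the cyclic subgroup generated by $2^t$ directly; your remarks on the well-definedness of the gcd and on the unnecessary hypothesis $t \ge 1$ are accurate.
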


\begin{proof}
Define the group homomorphism $\varphi : (\mathbb Z/r\mathbb Z)^k \to \mathbb Z/r\mathbb Z$ with $\varphi(x)=\sum_{i=1}^k a_i x_i$. Then the roots of $p$ are exactly $\ker(\varphi)$, hence $|\{x\in(\mathbb Z/r\mathbb Z)^k \mid p(x)=0\}| = |\ker(\varphi)|$. We now claim $\operatorname{im}(\varphi)=g\cdot(\mathbb Z/r\mathbb Z)$, that is, the image is the set of multiples of $g$ modulo $r$. To see this, we first note that  $g\mid a_i$, hence every value $\sum_{i=1}^k a_i x_i$ is divisible by $g$, implying that $\operatorname{im}(\varphi)\subseteq g(\mathbb Z/r\mathbb Z)$. Conversely, Bézout's identity gives integers $u_1,\dots,u_k,v$ such that $u_1 a_1 + \cdots + u_k a_k + vr = g$. Reducing modulo $r$ yields $\sum_{i=1}^k u_i a_i \equiv g \pmod r$, so $g\in \operatorname{im}(\varphi)$. By scaling the inputs $u_i$ by $l$, we obtain $l g\in \operatorname{im}(\varphi)$ for all $l$, hence $g(\mathbb Z/r \mathbb Z)\subseteq \operatorname{im}(\varphi)$, showing overall that $g(\mathbb Z/r \mathbb Z) = \operatorname{im}(\varphi)$. Consider next $\mu_g:\mathbb Z/r\mathbb Z \to \mathbb Z/r\mathbb Z$ given by $\mu_g(x)=g x$. Then $\operatorname{im}(\mu_g)=g(\mathbb Z/r\mathbb Z)$ and $\ker(\mu_g) = \{0, s, 2s, \ldots, (g-1)s\}$ for $s = r/g$, which is an integer value because $g | r$ by assumption. The first isomorphism theorem then gives 
\[
| (\mathbb Z/r\mathbb Z) / \ker(\mu_g) | = | \operatorname{im}(\mu_g)| , \ \text { yielding } \ 
|g(\mathbb Z/r\mathbb Z)|
=
|\operatorname{im}(\mu_g)|
=
\frac{|\mathbb Z/r\mathbb Z|}{|\ker(\mu_g)|}
=
\frac{r}{g}.
\]
In turn, since $|(\mathbb Z/r\mathbb Z)^k|=r^k$, another application of the first isomorphism theorem gives $| (\mathbb Z/r\mathbb Z)^k / \ker(\varphi) | = |\operatorname{im}(\varphi)|$, yielding
\[
|\ker(\varphi)|
=
\frac{|(\mathbb Z/r\mathbb Z)^k|}{|\operatorname{im}(\varphi)|}
=
\frac{|(\mathbb Z/r\mathbb Z)^k|}{|g(\mathbb Z/r\mathbb Z)|}
=
\frac{r^k}{r/g}
=
g\,r^{k-1}.
\]
\end{proof}

\begin{lemma}\label{eq:lemma:bound}[Birthday Paradox]
Let $1 \leq k \leq \sqrt{r}$ be an integer with $r = 2^q$ and set $\wp(k) = r(r-1)\cdots (r-k+1) = \prod_{i=0}^{k-1} (r-i)$ denote the cardinality of $P(k)$ in the proof above. Then, for $q \geq 64$, it holds that 
\[
\wp(k) / r^k \geq 0.60
\]
\end{lemma}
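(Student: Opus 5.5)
We need to prove $\wp(k)/r^k \ge 0.6$ for $1\le k\le \sqrt r$ with $r=2^q$, $q\ge 64$. We work with the logarithm of the product and bound it from below term by term, using an elementary inequality for $\log(1-x)$ when $x$ is small.

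First, write
\[
\frac{\wp(k)}{r^k}=\prod_{i=0}^{k-1}\Bigl(1-\frac{i}{r}\Bigr).
\]
Since each factor lies in $(0,1]$, the ratio is nonincreasing in $k$. It therefore suffices to treat the worst case $k=\sqrt r=2^{q/2}$. This is an integer when $q$ is even. For odd $q$ we take $k=\lfloor\sqrt r\rfloor$, and the argument below only uses $k\le\sqrt r$.

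Next, bound the logarithm. For $0\le x\le 1/2$ we have $\log(1-x)\ge -x-x^2$. This follows from the Taylor series, or from checking that $f(x)=\log(1-x)+x+x^2$ satisfies $f(0)=0$ and $f'(x)=x(1-2x)/(1-x)\ge 0$ on $[0,1/2]$. Every term here satisfies $i/r\le k/r\le 1/\sqrt r\le 2^{-32}$, so the inequality applies, and
\[
\log\frac{\wp(k)}{r^k}\ \ge\ -\sum_{i=0}^{k-1}\frac{i}{r}-\sum_{i=0}^{k-1}\frac{i^2}{r^2}
\ \ge\ -\frac{k(k-1)}{2r}-\frac{k^3}{3r^2}.
\]
Substituting $k\le\sqrt r$ gives $-\tfrac12-\tfrac{1}{3\sqrt r}\ge -\tfrac12-\tfrac13\cdot 2^{-32}$.

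Finally, exponentiate: the ratio is at least $e^{-1/2}\,e^{-2^{-32}/3}$. We have $e^{-1/2}\approx 0.6065$, and the correction factor exceeds $1-10^{-9}$, so the product is above $0.606>0.60$. The only delicate point is that the margin is thin, since the limiting constant $e^{-1/2}$ sits just above $0.6$. The estimate must therefore keep the first-order term with its exact coefficient $1/2$ and cannot replace it with a cruder bound. The hypothesis $q\ge 64$ makes the second-order term negligible; any $q$ large enough that $e^{-1/2-1/(3\sqrt r)}\ge 0.6$ would suffice. We would also note that $P(k)$, the set of vectors in $R^k$ with pairwise distinct coordinates, has exactly $\wp(k)$ elements: choose the coordinates in order, with $r-i$ choices for the $(i+1)$-st.
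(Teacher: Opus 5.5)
Your proof is correct and follows essentially the same route as the paper: write $\wp(k)/r^k$ as $\prod_{i=0}^{k-1}(1-i/r)$, bound each $\log(1-i/r)$ below by $-i/r$ up to a negligible second-order correction, sum to get $-\tfrac12 - O(2^{-q/2})$, and exponentiate using $e^{-1/2}\approx 0.6065$. The differences are minor: the paper uses $\log(1-x)\ge -x/(1-x)$ with the uniform constant $c=(1-2^{-q/2})^{-1}$ where you use $\log(1-x)\ge -x-x^2$, and your version states the final numerical margin explicitly where the paper only writes an $O(\cdot)$ term.
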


\begin{proof}
We write
\[
\frac{\wp(k)}{r^k}
=
\prod_{i=0}^{k-1}\left(1 - \frac{i}{r}\right).
\]
For $0 \le x < 1$ we use the inequality $\log(1-x) \ge -\frac{x}{1-x}$. Since $0 \le i \le k-1 \le  \sqrt{r}$, we have 
\[
\frac{1}{1 - \frac{i}{r}}
\leq
\frac{1}{1 - 2^{-q/2}} =: c 
\quad \text{ and } \quad 
\log\!\left(1 - \frac{i}{r}\right)
\ge
-\frac{\frac{i}{r}}{1 - \frac{i}{r}}
\ge
-\frac{ci}{r}.
\]

Summing over $i=0,\ldots,k-1$, we obtain
\begin{multline*}
\log \left( \frac{\wp(k)}{r^k} \right)
=
\sum_{i=0}^{k-1}
\log\!\left(1 - \frac{i}{r}\right) 
\ge
\sum_{i=0}^{k-1}
\left(-\frac{ci}{r}\right) \\
=
-\frac{c}{r} \sum_{i=0}^{k-1} i 
= 
-\frac{c}{r} \cdot \frac{k(k-1)}{2} 
\geq - \frac{1}{2} - \calO(2^{q/2}) 
\end{multline*}
Taking exponents on both sides, we obtain as claimed $\wp(k) / r^k \geq 0.60$.
\end{proof}

\begin{proof}[Theorem~\ref{thm:batch-correct} and~\ref{thm:rbsc-complexity}]
We first observe that an $A \in \mathbb{N}_0^{n \times n}$ whose row sums do not exceed $2^{q/2}$ implies that any aggregated matrix $A'$ arising from $A$ will enjoy the same property. This is because an aggregated matrix $A'$ with respect to a partition $H$ is obtained by dropping rows in $A$ (no effect on row sums) and by summing the columns of matrix $A$ according to the blocks of $H$ (preservation of row sums). With this, let us assume that $H$ has been computed by Algorithm~\ref{alg:rbsc}. To show that $H$ is indeed a BE, pick any $B,B' \in H$ and $i,j \in B$. We need to show that 
\begin{align}\label{eq:proof:external}
\sum_{k \in B'} a_{i,k} = \sum_{k \in B'} a_{j,k}
\end{align}
In the case $B$ contains a boundary node, $B$ must be a singleton block 
and there is nothing to prove because $i = j$. Let us thus assume that $B$ has no boundary nodes. In this case, block $B$ was add to $H$. 
Hence, there is a unique $\kappa$ such that $i,j \in V_\kappa$ and
\begin{align*}
\sum_{k \in B' : (i,k) \in E_\kappa} a_{i,k} = \sum_{k \in B' : (j,k) \in E_\kappa} a_{j,k} 
\end{align*}
Here, we implicitly assumed that the result of the randomized Algorithm~\ref{alg:rsc} was indeed correct (the formal error estimation follows by means of the law of total probability as in the proof of Theorem~\ref{thm:rsc}). Moreover, due to the construction of $V_\kappa$, for any $S \subseteq V$, it holds that
\[
\sum_{k \in S} a_{i,k} = \sum_{k \in S : (i,k) \in E_\kappa} a_{i,k} .
\]
A combination of the last two statements yields~(\ref{eq:proof:external}). 

For the proof of complexity, we first estimate the complexity of one until loop iteration. To this end, we note that batches $E_1, \ldots, E_{N_b}$ can be computed in $\calO(m)$ time and space. Likewise, $V_1,\ldots,V_{N_b}$ can be computed in $\calO(m)$ and $\calO(n+m)$ space by adding for each edge $(i,j) \in E$ its batch label $k$ (when $(i,j) \in E_k$) to the node it originates from, i.e., we add $k$ to a list associated to node $i$; after iterating through all edges, the inner nodes are these which have exactly one batch label. For the complexity of the for loop, we first note that the for loop comprises $N_b$ invocations of RSC for matrices with at most $M_b$ non-zero entries. This gives, respectively, the expected time and space complexity of $\calO(nm)$ and $\calO(n+m)$ by Theorem~\ref{thm:rsc} because $\sum_k |E_k| = m$. Additionally to that, the updates of $H_k^0$ and $H$ yield time complexity $\calO(n)$, which is not dependent on $N_b$. This can be realized by storing partitions as vectors and by updating only the part of the vector that is required in the current for loop iteration. For instance, each $H_k^0$ can be constructed by replacing $|V_k|$ entries of the vector $(1,2,3,\ldots,n)^T$ which encodes partition $\{ \{i\} \mid 1 \leq i \leq n\}$. A similar remark applies to updates of partition $H$. Since the aggregated matrix can be computed in $\calO(nm)$ steps (for each block  representative $i_B$, find all its outgoing edges in $A$ and sum them according to the blocks of partition $H$), the time and space complexity of one until loop iteration can be overall bounded by $\calO(nm)$ and $\calO(m)$, respectively. Since there are at most $n$ iterations of the until loop (this occurs if every batching reduces the graph by exactly one node), we obtain the complexity statement.
\end{proof}

\section{Implementation}

Both Algorithms~\ref{alg:rbsc} and~\ref{alg:rsc} are implemented in C++ with the CUDA toolkit \cite{10.1145/1401132.1401152}. Algorithm~\ref{alg:rbsc} is a CPU algorithm while the inner iteration of Algorithm~\ref{alg:rsc} has GPU sections; batch iterations in Algorithm~\ref{alg:rbsc} are executed in parallel using the OpenMP library to exploit all the GPUs available in the system \cite{10.1109/99.660313}. At the start of Algorithm~\ref{alg:rsc}, initial partition $H_k^0$ is transferred as a vector $y$ to the GPU memory, where nodes $i,j$ are in the same block of $H_k^0$ if and only if $y_i = y_j$; instead, the matrix-vector product of Algorithm~\ref{alg:rsc} is carried out using the device-wide reduction primitives of the CUB library. A custom CUDA kernel is used to compute the  
random entries of $w$ according to  the entries of $y$. Specifically, random entries are obtained by applying the \texttt{xoshiro256**} random number generator to the output of a \texttt{SplitMix64} generator with $y$ as the vector of seeds, i.e.
\[
w_i = \text{\texttt{xoshiro256**}(\texttt{SplitMix64}($y_i$))}
\]
for all $i$, where each $w_i$ is stored as an unsigned 64 bit integer (thus yielding $r = 2^{64}$ in~Algorithm~\ref{alg:rsc}), 
see~\cite{10.1145/3460772,10.1145/2714064.2660195} for more information on the routines.

\end{document}